\newtheorem{proposition}{Proposition}
\newtheorem{theorem}{Theorem}
\theoremstyle{definition}
\newtheorem{definition}{Definition}
\newtheorem{example}{Example}
\newtheorem{remark}{Remark}
\newcommand{\bra}[1]{\langle #1|}
\newcommand{\ket}[1]{| #1 \rangle }
\newcommand{\ip}[2]{{\langle #1|}{ #2 \rangle }}
\newcommand{\tr}[1]{{\rm tr}[#1]}
\newcommand{\be}{\begin{eqnarray}}
\newcommand{\ee}{\end{eqnarray}}
\newcommand{\cE}{{\cal E}}
\newcommand{\cG}{{\cal G}}
\newcommand{\cI}{{\cal I}}
\newcommand{\cN}{{\cal N}}
\newcommand{\cF}{{\cal F}}
\newcommand{\cS}{{\cal S}}
\newcommand{\cH}{{\cal H}}
\newcommand{\cB}{{\cal B}}
\newcommand{\cL}{{\cal L}}
\begin{document}

\title{Open system dynamics of simple collision models
\footnote{Lecture presented at 46 Karpacz Winter School of Theoretical Physics
Ladek Zdroj, Poland, 8 - 13 February 2010}}

\author{M\'ario Ziman$^{1,2}$ and Vladim\'\i r Bu\v zek$^{1,2}$
\\ \\
\it\normalsize
$^1$Institute of Physics, Slovak Academy of Sciences,
D\'ubravsk\'a cesta 9, 
\\ \it\normalsize 845 11 Bratislava, Slovakia,\\ 
\it\normalsize $^2$Faculty of Informatics, Masaryk University, Botanick\'a 68a
\\\it\normalsize  602 00 Brno, Czech Republic
}
\date{}
\maketitle

\begin{abstract}
A simple collision model is employed to introduce elementary
concepts of open system dynamics of quantum systems. In particular,
within the framework of collision models we introduce the 
quantum analogue of thermalization process called 
quantum homogenization and simulate quantum decoherence processes.
These dynamics are driven by partial swaps and controlled unitary
collisions, respectively. We show that collision models can be used
to prepare multipartite entangled states. Partial swap dynamics 
generates $W$-type of entanglement saturating the CKW inequalities,
whereas the decoherence collision models creates $GHZ$-type of entangled
states. The considered evolution of a system in a sequence of 
collisions is described by a discrete semigroup $\cE_1,\dots,\cE_n$.
Interpolating this discrete points within the set of quantum channels 
we derive for both processes the corresponding Linblad master equations.
In particular, we argue that collision models can be used as simulators
of arbitrary Markovian dynamics, however, the inverse is not true.
\end{abstract}



\section{Open system dynamics}\label{aba:sec1}
The goal of these lectures is to present the elementary ideas and features
of dynamics of open quantum systems by analyzing the properties of the 
simplest collision model we can think of. The material is based on papers
\cite{ziman_homogenization,scarani_thermo,ziman_open,ziman_decoherence}
on collision models coauthored by authors. It is not considered 
as a review paper on collision models or open system dynamics. 

It turns out it is surprisingly useful in physics to distinguish 
the concepts of isolated and open systems. 
The concept of isolated systems represents a simplification of physical 
reality, which serves as a playground for a clear formulation of 
elementary physical principles. It is postulated that the 
dynamics of isolated quantum systems is driven by \emph{Schr\"odinger 
equation} \cite{schrodinger1926}
\be
i\hbar\frac{d}{dt}\psi_t=H_t\psi_t\, ,
\ee
where $H_t$ is a hermitian operator called \emph{Hamiltonian}.
As a consequence it follows that the state transformations are 
unitary, i.e. $\psi_t\to\psi_{t^\prime}=U_{t\to t^\prime}\psi_t$, where
$U_{t\to t^\prime}U_{t\to t^\prime}^\dagger=U_{t\to t^\prime}^\dagger U_{t\to t^\prime}=I$.
If $H_t$ is time-independent, i.e. $H_t\equiv H$, then 
$U_{t\to t^\prime}=\exp{[-\frac{i}{\hbar}H(t^\prime-t)]}$. We can
define $U_t=\exp{(-\frac{i}{\hbar}Ht)}$ and write 
$U_{t\to t^\prime}=U_{t^\prime} U_t^\dagger$, $\psi_t=U_t\psi_0$.

Interactions between a system under consideration and its environment
are responsible for the violation of system's (dynamical) isolation. The 
joint evolution of the system and the environment will be still unitary, 
however, the system itself undergoes a different dynamics. The point is 
that these interactions, because of the complexity of the environment, 
are typically out of our control. And our wish is to invent models of 
system-environemt interactions capturing faithfully the key properties 
of the dynamics of the system alone. For a general discussion on 
the models of open system dynamics we refer to \cite{davies1970,alicki,breuer}.
In what follows we will restrict our analysis to a 
particular toy model of the open system dynamics. 

In the rest of this section we will introduce the model and elementary
properties of quantum channels. In the following two sections we will
discuss thermalization and decoherence processes within this model. In Section 
IV we will focus on entanglement created in the discussed collision models and
finally, in the Section V we will derive master equations for these collision
processes.

\subsection{Simple collision model}
Let $\cH$ be the Hilbert space of the system of interest. States are identified
with elements of the set of density operators (positive operators of unit trace)
\be
\cS(\cH)=\{\varrho: \varrho\geq O,\ \tr{\varrho}=1\}\,.
\ee
We will assume that environment consists of huge number of particles, each 
of them described by the same Hilbert space $\cH$. In a sense, 
it forms a reservoir of particles. Moreover, we assume that initially they are
all in the same state $\xi$, i.e. the initial state of the 
environment/reservoir is $\omega=\xi^{\otimes N}$. In each time step the 
system interacts with
a single environment's particle. The interaction is described by a unitary
operator $U$ acting on joint Hilbert space $\cH\otimes\cH$. Thus, the
one-step system's evolution is described by a map
\be
\label{eq:ucpmodel}
\varrho\to\varrho^\prime=
\cE[\varrho]={\rm tr}_{\rm env}[U(\varrho\otimes\xi)U^\dagger]\,,
\ee
where $\varrho$ is the initial state of the system and ${\rm tr}_{\rm env}$
denotes the partial trace over the environment. In each time step a system
undergoes a collision with a single particle from the reservoir. In a realistic
scenarios these collisions are happening randomly, but we will assume that
each reservoir's particle interacts with the system at most once 
(see Fig. \ref{fig1}). Thus, the system evolution is driven by a 
sequence of independent collisions. After the $n$th collision we get
\be
\varrho_0\to\varrho_n^\prime=
{\rm tr}_{\rm env}[U(\varrho_{n-1}\otimes\xi)U^\dagger]=
\cE[\varrho_{n-1}]=\cE^n[\varrho_0]\,.
\ee

\begin{figure}
\begin{center}
\includegraphics[width=8cm]{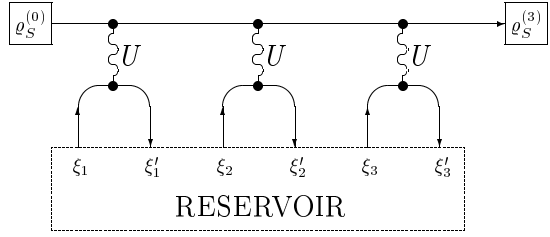}
\end{center}
\caption{A simple collision model.}
\label{fig1}
\end{figure}

\subsection{Quantum channels}

The mapping $\cE$ defined on the set of all operators 
$\cL(\cH)$ satisfies the following properties:
\begin{enumerate}
\item $\cE$ is linear, i.e. $\cE[X+cY]=\cE[X]+c\cE[Y]$ for all 
$X,Y\in\cL(\cH)$ and all complex numbers $c$.
\item $\cE$ is trace-preserving, i.e. $\tr{\cE[X]}=\tr{X}$ for all 
operators $X$ with finite trace.
\item $\cE$ is completely positive, i.e. 
$(\cE\otimes\cI)[\Omega]\geq O$ for all positive operators 
$\Omega\in\cL(\cH\otimes\cH)$.
\end{enumerate}
Due to \emph{Stinespring dilation theorem} the inverse is also true. 
If a mapping $\cE$ satisfies above three conditions, 
there is a unitary operator $U$ and a state $\xi$ such that the 
Eq.~\eqref{eq:ucpmodel} holds. Such completely positive trace-preserving 
linear maps we call \emph{channels}.

For any state $\xi\in\cS(\cH)$ there exist a unit vector $\Psi\in\cH\otimes\cH$
such that $\xi={\rm tr}_2 \ket{\Psi}\bra{\Psi}$, where ${\rm tr}_2$ is the 
partial trace over the second subsystem. We call $\Psi$ 
\emph{a purification} of $\xi$. It is straightforward to verify that
the following identity holds
\be
\cE[\varrho]={\rm tr}_{\rm env}[U(\varrho\otimes\xi)U^\dagger]=
{\rm tr}_{\rm env^\prime}[(U\otimes I)
(\varrho\otimes\ket{\Psi}\bra{\Psi})(U\otimes I)^\dagger]\,.
\ee
Let $\varphi_j$ is an orthonormal basis on $\cH\otimes\cH$ and 
define linear operators $A_j=\ip{\varphi_j}{(U\otimes I)\Psi}$ 
acting on the system. Then the right-hand side of the above 
equation reads
\be
\cE[\varrho]=\sum_j \ip{\varphi_j}{(U\otimes I)\Psi}\varrho
\ip{(U\otimes I)\Psi}{\varphi_j}=\sum_j A_j\varrho A_j^\dagger\,.
\ee
By definition the so-called \emph{Kraus operators} 
$A_j$ satisfy the normalization $\sum_j A_j^\dagger A_j=I$. 
Any mapping of the Kraus form $\cE[\varrho]=\sum_j A_j\varrho A_j^\dagger$
for arbitrary set of operators $\{A_1,A_2,\dots\}$ fulfilling the
normalization constraint determines a valid quantum channel.

Channels with a single Kraus operator $\cE[\varrho]=A\varrho A^\dagger$ 
are \emph{unitary channels}, because the normalization
$A^\dagger A=I$ implies that $A$ is unitary. Let us note that Kraus
representation and also the number of Kraus operators are not unique.
For more details we refer to \cite{nielsen,heinosaari}.
\section{Quantum homogenization as an analogue to thermalization}
\label{aba:sec2}
\emph{The 0th law of thermodynamics} postulates that an object brought 
in a contact with a reservoir of temperature $T$ will equalize its 
temperature with the reservoir's temperature. This process is known 
as the \emph{process of thermalization}. Our attempt is to design 
a quantum analogue of this process.

The concept of  temperature is the key ingredient of the thermalization
process. In fact, as a consequence it is the temperature  that describes 
a state of the system, that is, it contains the relevant part
of the information needed for the investigation of the system's properties. 
What is the quantum analogue of the temperature? There are several ways 
how this concept can be introduced. One can use the connection between 
temperature and the entropy, or energy per single particle. By choosing 
such analogue, the thermalization can be defined as a process
that leads a system with a given temperature $T_S$ to the temperature
of the reservoir $T_\xi$. In classical theory this process
is achieved because of mutual interactions/collisions between the particles.
In the process of collision they mutually (partly) exchange their 
microscopic properties (energy, momentum). Subsequently, the temperature
of the system being in a contact with the reservoir is changing and finally 
(approximately) equalize  $T_\xi$. Let us note that we have used 
the temperature as a property that can be attributed even to a single particle,
however, such definition is meaningful only in the statistical sense. 

In quantum theory the state of a quantum system is described with
a density operator. This operator contains all the information. 
Therefore, we choose single-particle density operator
to play the role of the quantum temperature in our quantum analogue.
So the goal is to design a model of the dynamics
of the system+reservoir such that the initial state
of the system particle $\varrho_S$ evolves into the
state of the reservoir particles $\xi$. Because of the property
that finally all particles are described by the same state $\xi$
we will refer to this process as to {\it homogenization} instead
of the thermalization. In fact this notion is more appropriate
to characterize the features of our model.

Ideally the homogenization should implement the following transformation
\be
\varrho_S\otimes\xi\otimes\cdots\otimes\xi
\mapsto\xi_S\otimes\xi\otimes\cdots\otimes\xi\,,
\ee
for all $\varrho$ and $\xi$. For finite reservoirs this process will perform
$N\to N+1$ cloning transformation, which would violate the quantum no-cloning 
theorem \cite{buzek_cloning,scarani_cloning}. Therefore, this process cannot
be achieved in any dynamical model respecting the quantum theory. We will 
relax our requirements and investigate whether the homogenization 
can be implemented approximately via a sequence of (independent) collisions.
Moreover, if initially $\varrho=\xi$, meaning there is nothing to homogenize,
then the evolution is trivial. Let us summarize our assumptions and problem.

\begin{definition}
We call an interaction $U$ a generator of $\delta$-homogenization $\delta>0$
is the following conditions are satisfied
\begin{itemize}
\item{\it Trivial homogenization.} 
The joint state of the system and the environment after $n$th interaction
reads
\be
\Omega_n=U_n\cdots U_1(\varrho_S\otimes\xi^{\otimes n}) 
U_1^\dagger\cdots U_n^\dagger\,,
\ee
where $U_j=U_{S,j}\otimes I_{{\rm env}\setminus\{j\}}$ describes the $j$th 
interaction
($I_{{\rm env}\setminus\{j\}}$ is the identity operator on the environment 
excluding the $j$th particle). The for all $\xi\in\cS(\cH)$
\be
\xi_S\otimes\xi\otimes\cdots\otimes\xi
\mapsto\Omega_n=\xi_S\otimes\xi\otimes\cdots\otimes\xi\,.
\ee
\item{\it Convergence.} Let $\varrho_S^{(n)}$ be the state of the system
after $n$th collision. Then there exist $N_\delta$ such that for all $n>N_\delta$
\be
D(\varrho_S^{(n)},\xi)\leq\delta\,.
\ee
\item{\it Stability of the reservoir.}
Let $\xi_j^\prime$ be the state of the $j$th reservoir's particle after the 
its interaction with the system. Then 
\be
D(\xi_j^\prime,\xi)\leq\delta\,,
\ee
for all $j$.
\end{itemize}
\end{definition}

Our task is simple: find $U$ satisfying the above properties.

\subsection{Trivial homogenization}
Let us denote by $\cH_\pm$ the symmetric and antisymmetric subspaces
of $\cH\otimes\cH$. A vector $\Phi$ belongs to $\cH_\pm$ if $S\Phi=\pm\Phi$, 
where $S$ is the swap operator defined as $S=\sum_{jk} 
\ket{\varphi_j\otimes\varphi_k}\bra{\varphi_k\otimes\varphi_j}$
for arbitrary orthonormal basis $\{\varphi_j\}$ of $\cH$. It follows
that for all $\varrho,\xi\in\cS(\cH)$ 
\be
S(\varrho\otimes\xi) S^\dagger=\xi\otimes\varrho\,.
\ee
Let us denote by $P_\pm$ projections onto symmetric and antisymmetric 
subspace. Then $S=P_+-P_-$.

The condition of trivial homogenization implies that
\be
U\ket{\psi\otimes\psi}=\ket{\psi\otimes\psi}
\ee
for all unit vectors $\psi\in\cH$. This determines the action 
of the transformation $U$ on the symmetric subspace $\cH_{+}$.
However, it does not give any constraint on the antisymmetric 
subspace $\cH_{-}$. Since $\cH\otimes\cH=\cH_+\oplus\cH_-$ we can write
$U=e^{i\gamma}I_+\oplus V_-$, where $V_-$ is arbitrary unitary 
operator on the subspace
$\cH_-$ and $I_+$ is the identity operator on the symmetric subspace $\cH_+$.
In fact, $I_+=P_+|_{\cH_+}$. Any of the collisions of the form 
$U=e^{i\gamma}I_+\oplus V_-$ fulfills the conditions of trivial 
homogenization.

Let us consider the case of $d={\rm dim}\cH=2$ (qubit), for which
the antisymmetric space is one dimensional. In one-dimensional Hilbert
spaces the unitary operators are complex square roots of the unity, i.e.
$V_-=e^{i\beta}I_-$. Explicitly the transformation can be expressed as
\begin{eqnarray*}
U \ket{00} & = & e^{i\gamma}\ket{00}\; , \\
U \ket{11} & = & e^{i\gamma}\ket{11}\; , \\
U( \ket{01}+\ket{10}) & = & e^{i\gamma}( \ket{01}+\ket{10})  \; , \\
U(\ket{01}-\ket{10}) & = & 
e^{i(\gamma+\beta)}(\ket{01}-\ket{10})  \; . 
\end{eqnarray*}
Using the identities $e^{i\gamma}=e^{i(\gamma+\beta/2)}e^{-i\beta/2}$,
$e^{i(\gamma+\beta)}=e^{i(\gamma+\beta/2)}e^{i\beta/2}$ and discarding
the irrelevant global phase factor $e^{i(\gamma+\beta/2)}$ we get
a one-parametric set of solutions
\be
U_\eta=\cos\eta I +i\sin\eta S\,,
\ee
where we set $\eta=-\beta/2$.

In case of more dimensional space (where $d>2$) the set of unitary 
transformations is larger. However, when we fix $V_+=e^{i\beta}I_-$,
we get the same one-parametric set of solutions
\be
U_\eta=\cos\eta I +i\sin\eta S=e^{i\eta S}\,.
\ee
Any element of this class of unitary operators we will call 
\emph{a partial swap}. In what follows we will focus on the collision
dynamics generated by partial swap collisions.

\subsection{Partial swap collisions}
In what follows we denote $\sin \eta=s$ and $\cos  \eta=c$. In the process 
of homogenization, the system particle interacts
sequentially with one of the $N$ particles of the reservoir through
the transformation $U_\eta=cI+is S$. After the first collision
the system and the first reservoir particle undergo the evolution
\be
\nonumber
\varrho_S^{(0)}\otimes\xi\mapsto U_\eta(\varrho_S^{(0)}\otimes\xi)U_\eta^\dagger=
c^2\varrho_S^{(0)}\otimes\xi+s^2\xi\otimes\varrho_S^{(0)}+ics[S,\varrho_S^{(0)}\otimes\xi]\,.
\ee
The states of the system particle and of the reservoir
particle are obtained as partial traces. Specifically, after
the system particle is in the state described by the density operator
\be
\label{2.4}
\varrho_S^{(1)}=c^2\varrho_S^{(0)}+s^2\xi+ics[\xi,\varrho_S^{(0)}],
\ee
while the first reservoir particle is now in the state
\be
\label{2.5}
\xi_{1}^\prime=s^2\varrho_S^{(0)}+c^2\xi+ics[\varrho_S^{(0)},\xi].
\ee
When we recursively apply  the partial-swap transformation then 
after the interaction with
the $n$-th reservoir particle, we obtain
\be
\label{2.6}
\varrho_S^{(n)}=c^2\varrho_S^{(n-1)}+s^2\xi+ics[\xi,\varrho_S^{(n-1)}]\,
,
\ee
as the expression for the density operator of the system
particle, while the $n$-th reservoir particle is in the state
\be
\label{2.7}
\xi_{n}^\prime=
s^2\varrho_S^{(n-1)}+c^2\xi+ics[\varrho_S^{(n-1)},\xi]\, .
\ee

\subsection{System's convergence}
Our next task is to analyze the remaining conditions on the homogenization
process. First task is to show that $\varrho^{(n)}_S$ monotonically converges 
to a $\delta$-vicinity of $\xi$ for all parameters $\eta\neq 0$. We will 
utilize the {\it Banach fixed point theorem} \cite{Simon&Reed80} 
claiming that iterations
of a \emph{strictly contractive} mapping converge to the single-point 
contraction. This point is unique and determined by the condition
$\cE[\xi]=\xi$. A channel ${\cal E}$ is called {\em strictly contractive} 
if it fulfills  the inequality $D({\cal E}[\varrho],{\cal E}[\xi])
\le kD(\varrho,\xi)$ with $0\le k<1$ for all $\varrho,\xi\in\cal S$.

First of all, it is easy to verify that $\xi$ is a fixed point of 
$\cE_\xi$, where $\cE_\xi$ is defined by the Eq.(\ref{2.4}).
As a distance we will use the Hilbert-Schmidt distance
\be
D(\varrho_1,\varrho_2)=||\varrho_1-\varrho_2||=
\sqrt{\tr{\varrho_1^2}+\tr{\varrho_2^2}
-2\tr{\varrho_1\varrho_2}}
\ee
where $\varrho_1,\varrho_2$ represent quantum states. Consequently,
\be
D^2(\cE[\varrho_1],\cE[\varrho_2])
=c^4 D^2(\varrho_1,\varrho_2)+c^2s^2||[\xi,\varrho_1-\varrho_2]||^2
\ee
where we used that $\cE[\varrho]=c^2\varrho+s^2\xi+ics[\xi,\varrho]$.
Defining an operator $\Delta=\varrho_1-\varrho_2$
our aim is to prove the relation $||[\xi,\Delta]||^2\le ||\Delta||^2$.
Assuming the spectral decomposition of $\xi$ equals 
$\xi=\sum_k\lambda_k\ket{k}\bra{k}$ and performing the trace in the basis 
$\ket{k}$ we obtain
  \be
  \nonumber
  ||\xi\Delta-\Delta\xi||^2&=&\tr{(\xi\Delta-\Delta\xi)^\dagger
  (\xi\Delta-\Delta\xi)}=2\tr{\xi^2\Delta^2}-2\tr{(\xi \Delta)^2}\\
  \nonumber &=&2{\mathbf \sum}_j \lambda_j^2 \bra{j}\Delta^2\ket{j}
  -2{\mathbf \sum}_{j,k}\lambda_j\lambda_k |\bra{j}\Delta\ket{k}|^2\\
  \nonumber &=& 2{\sum}_{j,{k}} \lambda_j^2 \bra{j}\Delta{\ket{k}\bra{k}}\Delta\ket{j}-2{\mathbf \sum}_{j,k}\lambda_j\lambda_k |\bra{j}\Delta\ket{k}|^2\\
  \nonumber &=& {\mathbf \sum}_{j,k} ({2\lambda_j^2}-2\lambda_j\lambda_k)|\bra{j}\Delta\ket{k}|^2\\
  \nonumber &=& {\mathbf \sum}_{j,k}({\lambda_j^2+\lambda_k^2}-2\lambda_j\Lambda_k)|\bra{j}\Delta\ket{k}|^2={\mathbf \sum}_{j,k}(\lambda_j-\lambda_k)^2|\bra{j}\Delta\ket{k}|^2
  \\ \nonumber &\leq& {\mathbf \sum}_{j,k}|\bra{j}\Delta\ket{k}|^2=
  \tr{\Delta^\dagger\Delta}=||\Delta||^2
  \ee
where we employed the hermiticity of $\Delta$ 
($|\bra{j}\Delta\ket{k}|^2=|\bra{l}\Delta\ket{j}|^2$) and 
positivity of $\xi$ which means $|\lambda_1-\lambda_2|\le 1$.
The proved inequality enables us to write
\be
D(\cE[\varrho_1],\cE[\varrho_2])\le |c| D(\varrho_1,\varrho_2)
\ee
i.e. the contractivity coefficient 
$k$ is determined by the parameter
$c=\cos\eta$ of the partial swap 
and the map $\cE$ is contractive whenever $|c|<1$.
This result is important because it ensures that
in the limit of infinite steps the system will 
be not only $\delta$-homogenized, but will be described exactly 
by the state $\xi$ of the reservoir. It means the distance 
$D(\varrho_S^{(n)},\xi)\to 0$ vanishes with the number of 
interactions $n$.

\subsection{Stability of the reservoir}
In order to satisfy the last condition (stability of the reservoir)
we need to evaluate the distances $D(\xi_j^\prime,\xi)$. For each
value of $\eta$ we can specify $\delta$  and $N_\delta$. However, the
question is whether for arbitrary value of $\delta>0$ there is a 
suitable collision $U_\eta$ and how large is the value 
of $N_\delta$ for which $D(\varrho_S^{(N_\delta)},\xi)\leq \delta$.

For the system we can combine the contractivity bound 
$D(\cE[\varrho],\cE[\xi])\leq |c|D(\varrho,\xi)$ with the fact that
$\cE[\xi]=\xi$ to obtain an estimate
\be
D(\xi,\varrho_S^{(n)})\leq |c|^n D(\xi,\varrho_S^{(0)})=\sqrt{2}|c|^n\leq\delta
\,.
\ee
Consequently,
\be
N_\delta=\frac{\ln(\delta/\sqrt{2})}{\ln|c|}\,,
\ee
but the potential values of $\delta$ we need to specify from
the stability of the reservoir. A direct calculation gives the following bound
\be
\nonumber
D(\xi,\xi_n^\prime)
&=&||\xi-c^2\xi-s^2\varrho_S^{(n-1)}-ics[\varrho_S^{(n-1)},\xi]||\\
\nonumber
&\leq& s^2||\xi-\varrho_S^{(n-1)}||+|cs|\cdot||[\varrho_S^{(n-1)},\xi]||\\
\nonumber
&\leq& s^2D(\xi,\varrho_S^{(n-1)})+2|cs|\cdot||\varrho_S^{(n-1)}||\cdot||\xi||\\
\nonumber
&\leq& \sqrt{2}|sc| (|s|\cdot|c|^{n-2}+\sqrt{2})\\
&<& \sqrt{2}|sc| (1+\sqrt{2})\equiv\delta\,.
\ee
Important is that $\delta$ can be arbitrarily small, hence 
$\delta$-homogenization is achievable for a restricted class 
of collisions $U_\eta$ satisfying the identity 
$(2+\sqrt{2})|\sin\eta\cos\eta|\leq\delta$. The number of steps
$N_\delta$ we achieve if the value of $\delta$ is inserted into the
formula for $N_\delta$, i.e.
\be
N_\delta\approx\frac{\ln{[(1+\sqrt{2})|sc|]}}{\ln|c|}\, .
\ee
Let us note that one can derive more precise expression for
$N_\delta$ and $\delta$, but this is not important for our further
purposes.

In summary, the class of partial swap operators $U_\eta$ satisfies
the conditions for $\delta$-homogenization process. 

\begin{remark}{\it Swap in the classical theory.}
Let us note that in the classical picture of thermalization process
we also use model of mutual collisions of the system with its 
thermal reservoir. In these (inelastic) collisions the particles 
exchange energy and momentum. In some sense, this process 
can be understood as a transformation that partially ``swaps'' energy 
and momentum of involved particles. Thus, from this perspective the 
homogenizing properties of partial swap collisions are not that 
surprising and can be viewed as the quantum analogue of classical
collisions.
\end{remark}

\subsection{Invariance of single-particle average state}
In this section we will allow particles forming the reservoir interacts
among each other, but interactions are always pairwise. 
Partial swap operators possess one unique property, which
is important from the point of view of more realistic thermalization 
process and reservoir's stability. Consider $\Omega$ being an $n$-partite 
composite system. It can be written in the form
\be
\Omega=\varrho_1\otimes\cdots\otimes\varrho_n+\Gamma\,,
\ee
where $\Gamma$ is a traceless hermitian operators and
$\varrho_j={\rm tr}_{\overline{j}}\Omega$ are the states of
individual subsystems. Applying a partial swap collision
between $j$th and $k$th subsystems we obtain a state
\be
\Omega^\prime&=&c^2\Omega+s^2 S_{jk}\Omega S_{jk}+ics[\Omega,S_{jk}]\\
&=& \varrho_1^\prime\otimes\cdots\otimes\varrho^\prime_n+\Gamma^\prime\,,
\ee
where $S_{jk}$ denotes the swap operator between the subsystems $j$ and $k$,
\be
\varrho_j^\prime&=&c^2\varrho_j+s^2\varrho_k+ics[\varrho_j,\varrho_k]\\
\varrho_k^\prime&=&c^2\varrho_k+s^2\varrho_j-ics[\varrho_j,\varrho_k]\\
\varrho_{l}^\prime&=&\varrho_l\quad{\rm for}\quad l\neq j,k\,,
\ee
and $\Gamma^\prime$ is the traceless part. It is straightforward to
see that
\be
\overline{\varrho}_{\rm one}=\frac{1}{n}\sum_j \varrho_j=\frac{1}{n}
\sum_j\varrho_j^\prime=\overline{\varrho}^\prime_{\rm one}\,,
\ee
thus, in the collision model provided by partial swap collisions
the \emph{average one-particle state} $\overline{\varrho}_{\rm one}$ 
is preserved. If taken the single-particle state as the quantum analogue 
of temperature, then this property means that partial swap interactions
preserves the "quantum'' temperature.

\section{Quantum decoherence via collisions}\label{aba:sec3}
In this section we will focus on quantum dynamics of decoherence. Let us note
that in literature on quantum information the decoherence is used to name
any nonunitary channel and the channels we are interested in are called
(generalized) phase-damping channels. In accordance with our
previous work \cite{ziman_decoherence} we will understand by decoherence 
a channel with the following properties:
\begin{itemize}
\item Preservation of the diagonal elements of a density matrix with respect
to a given (decoherence) basis. Let us denote by 
$\cB=\{\varphi_1,\dots,\varphi_d\}$ the \emph{decoherence basis} of $\cH$.
Then
\be
\cE\circ{\rm diag}_\cB={\rm diag}_\cB\,,
\ee
where ${\rm diag}_\cB$ is a channel diagonalizing density operators
in the basis $\cB$, i.e. $\varrho\to{\rm diag}_\cB[\varrho]$.
\item Vanishing of the off-diagonal elements of a density matrix with respect
to a given (decoherence) basis. In particular, asymptotically the iterations
of the decoherence channels results in the diagonalization of the density 
matrix, i.e.
\be
\lim_{n\to\infty}\cE^n={\rm diag}_\cB\,.
\ee
\end{itemize}
Our tasks is similar as in the previous part: find collisions $U$ generating
the decoherence in basis $\cB$.

The basis preservation implies that
\be
\ket{\varphi_j\otimes\varphi_k}\mapsto 
e^{i\eta_{jk}}\ket{\varphi_j\otimes\varphi_{jk}^\prime}\,,
\ee
where $\eta_{jk}$ are phases and for each $j$ the unit vectors
$\{\varphi_{jk}^\prime\}_k$ form an orthonormal basis of $\cH$.
It is not difficult to verify that the unitary operators describing
the collisions have the following form 
\be
U=\sum_j \ket{\varphi_j}\bra{\varphi_j}\otimes V_j\,,
\ee
where $V_j=\sum_k e^{i\eta_{jk}}\ket{\varphi_{jk}^\prime}\bra{\varphi_k}$
are unitary operators on individual systems of the environment.
Such operators form a class of \emph{controlled unitaries} with the system
playing the role of the control system and the environment playing the role of
the target system.

A single controlled unitary collision induce the following channel
on the system
\be
\cE[\varrho]={\rm tr}_{\rm env}[U(\varrho\otimes\xi) U^\dagger]
=\sum_{j,k}\ket{\varphi_j}\bra{\varphi_j}\varrho\ket{\varphi_k}\bra{\varphi_k}
\tr{V_j\xi V_k^\dagger}\,.
\ee
Let us note an interesting feature that under the action of this channel
the matrix elements of density operators do not mix, i.e. the output 
value of $\varrho_{jk}^\prime=\ip{\varphi_j}{\cE[\varrho]\varphi_k}$ 
depends only on the value of $\varrho_{jk}=\ip{\varphi_j}{\varrho\varphi_k}$.
In order to fulfill the second decoherence condition it is sufficient to
show that for $j\neq k$
\be
|\varrho_{jk}^\prime|<|\varrho_{jk}|\,.
\ee

For any unitary operator $W=\sum_l e^{iw_l}\ket{l}\bra{l}$ 
\be
|\tr{\xi W}|=|\sum_l e^{iw_l}\bra{l}\xi\ket{l}|=|\sum_l p_l e^{iw_l}|\leq 1\,,
\ee
where $p_l=\bra{l}\xi\ket{l}$ are probabilities. This inequality is saturated
only if $\xi$ is an eigenvector of $W$. It follows that the inequality 
\be
|\varrho_{jk}^\prime|
=|\varrho_{jk}|\cdot|\tr{V_j\xi V_k^\dagger}|\leq |\varrho_{jk}|\,
\ee
holds, hence the off-diagonal elements are non-increasing.
They are strictly decreasing if and only if $|\tr{V_j\xi V_k^\dagger}|<1$
for all $j\neq k$, which is the case if $\xi$ is not an eigenvector
of any of the operators $V_k^\dagger V_j$.

In summary, we have shown that decoherence processes are intimately
related with controlled unitary operators. In particular,
the decoherence processes are generated by controlled unitary
collisions $U=\sum_j \ket{\varphi_j}\bra{\varphi_j}\otimes V_j$
satisfying $|\tr{V_j\xi V_k^\dagger}|<1$ for all $j\neq k$. In such case
\be
\lim_{n\to\infty} \cE^n[\varrho]={\rm diag}_\cB[\varrho]\,,
\ee
where ${\rm diag}_\cB$ is a channel diagonalizing the input density
operator in the basis $\cB$. Let us note that there are always
states $\xi$ (eigenvectors of $V_k^\dagger V_j$) for which the decoherence 
is not achieved, because for them some of the off-diagonal
elements are preserved. On the other side, for each controlled unitary
collision $U$ there are always states of the environment $\xi$ 
inducing the decoherence of all off-diagonal matrix elements.

\subsection{Simultaneous decoherence of the system and the environment}
Could it happen that both the environment and the system are decohering
simultaneously? In the same decoherence basis? We will see that answers
to both these questions are positive.

For the considered
collision model driven by controlled unitary interactions we obtain that
the evolution of the particles in the reservoir is described
by the following channels
\be
\cN[\xi]={\rm tr}_{\rm sys}[U(\varrho\otimes\xi)U^\dagger]=
\sum_j \varrho_{jj} V_j\xi V_j^\dagger\,.
\ee
Such channels are known as \emph{random unitary channels}.

It is known that random unitary channels can describe decoherence
channels and it was a surprising result that decoherences are not 
necessarily random unitary channels \cite{dariano_decoherence}.
We do not need to go into the details of this relation to see that
both the system and the environment can decohere. It follows from
our previous discussion that environment will decohere if
$U$ is controlled unitary transformation of the form
\be
U=\sum_j U_j\otimes\ket{\psi_j}\bra{\psi_j}\,,
\ee
where $\{\psi_j\}$ is the decoherence basis of the particles in the 
environment. That is, the question is whether there are unitary operators
that can be written as 
\be
U=\sum_j U_j\otimes\ket{\psi_j}\bra{\psi_j}=
\sum_j \ket{\varphi_j}\bra{\varphi_j}\otimes V_j\,,
\ee
for suitable decoherence bases $\{\varphi_j\}$, $\{\psi_j\}$ 
and unitary operators $U_j,V_j$.

\begin{example}{\it CTRL-NOT.}
\be
U_{\rm ctrl-NOT}&=&\ket{0}\bra{0}\otimes I+\ket{1}\bra{1}\otimes\sigma_x=
I\otimes\ket{+}\bra{+}+\sigma_z\otimes\ket{-}\bra{-}\\
&=& \ket{0+}\bra{0+}+\ket{0-}\bra{0-}+\ket{1+}\bra{1+}-\ket{1-}\bra{1-}
\,,
\ee
where $\ket{\pm}=(\ket{0}\pm\ket{1})/\sqrt{2}$.
\end{example}

\begin{proposition}
A unitary operator $U$ describes a collision with simultaneous
decoherence of both interacting systems if and only if
$U=\sum_j\ket{\varphi_j}\bra{\varphi_j}\otimes V_j$ and the 
unitary operators $V_j$ commute with each other.
\end{proposition}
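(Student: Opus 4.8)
The plan is to take seriously the structural characterization established in the discussion preceding the statement: the collision $U$ decoheres the system in a basis $\{\varphi_j\}$ exactly when it admits the controlled form $U=\sum_j\ket{\varphi_j}\bra{\varphi_j}\otimes V_j$, and it decoheres the environment in a basis $\{\psi_k\}$ exactly when it admits the dual controlled form $U=\sum_k U_k\otimes\ket{\psi_k}\bra{\psi_k}$. Hence \emph{simultaneous} decoherence is precisely the demand that one and the same unitary $U$ carry both representations at once, and the proposition reduces to a purely algebraic question: for which controlled unitaries does the second representation also exist? I would prove the two implications separately.

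For the implication assuming the $V_j$ commute, I would invoke that a mutually commuting family of unitary (hence normal) operators is simultaneously diagonalizable, so there is a common orthonormal eigenbasis $\{\psi_k\}$ with $V_j=\sum_k e^{i\theta_{jk}}\ket{\psi_k}\bra{\psi_k}$. Substituting this into $U=\sum_j\ket{\varphi_j}\bra{\varphi_j}\otimes V_j$ and exchanging the order of summation rewrites $U$ as $\sum_{j,k}e^{i\theta_{jk}}\ket{\varphi_j}\bra{\varphi_j}\otimes\ket{\psi_k}\bra{\psi_k}$. Collecting the terms by the index $k$ and setting $U_k=\sum_j e^{i\theta_{jk}}\ket{\varphi_j}\bra{\varphi_j}$ — each unitary, being diagonal with unimodular entries — yields the environment-controlled form $U=\sum_k U_k\otimes\ket{\psi_k}\bra{\psi_k}$, so the environment decoheres in $\{\psi_k\}$ as well.

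For the converse I would recover the $V_j$ by sandwiching $U$ between system basis vectors. The first representation gives $\bra{\varphi_a}U\ket{\varphi_a}=V_a$, while the second gives $\bra{\varphi_a}U\ket{\varphi_a}=\sum_k\bra{\varphi_a}U_k\ket{\varphi_a}\,\ket{\psi_k}\bra{\psi_k}$, which is manifestly diagonal in $\{\psi_k\}$. Thus every $V_a$ is diagonal in one and the same orthonormal basis, and operators that are simultaneously diagonal commute. As a consistency check, the off-diagonal sandwich ($a\neq b$) forces $\sum_k\bra{\varphi_a}U_k\ket{\varphi_b}\ket{\psi_k}\bra{\psi_k}=0$, and linear independence of the rank-one projectors $\ket{\psi_k}\bra{\psi_k}$ then shows each $U_k$ is diagonal in $\{\varphi_j\}$, the symmetric counterpart of the statement.

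I expect the only step requiring genuine care to be the appeal to simultaneous diagonalizability: one must note explicitly that the $V_j$ are unitary and therefore normal, so that a commuting family truly possesses a common eigenbasis — and it is exactly this eigenbasis that must serve as the environment's decoherence basis $\{\psi_k\}$. Everything else is bookkeeping with finite sums of mutually orthogonal rank-one projectors, so no analytic subtlety arises.
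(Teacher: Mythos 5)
Your proof is correct and takes essentially the same route as the paper: both directions rest on the two dual controlled decompositions, together with the fact that unitaries commute exactly when they are diagonal in a common orthonormal basis, which then serves as the environment's decoherence basis. The only cosmetic difference is in the forward direction, where you obtain diagonality of each $V_a$ by sandwiching the two representations between system basis vectors, $V_a=\bra{\varphi_a}U\ket{\varphi_a}=\sum_k\bra{\varphi_a}U_k\ket{\varphi_a}\,\ket{\psi_k}\bra{\psi_k}$, whereas the paper first derives the joint diagonal form $U=\sum_{jk}e^{i\eta_{jk}}\ket{\varphi_j}\bra{\varphi_j}\otimes\ket{\psi_k}\bra{\psi_k}$ from the simultaneous preservation of both bases and reads off the same conclusion.
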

\begin{proof}
By definition, the simultaneous decoherence requires
simultaneous preservation of decoherence bases $\{\varphi_j\}$ 
and $\{\psi_k\}$, i.e.
\be
\ket{\varphi_j\otimes\psi_k}\to e^{i\eta_{jk}}\ket{\varphi_j\otimes\psi_k}\,.
\ee
It follows that
\be
U=\sum_{jk}e^{i\eta_{jk}}\ket{\varphi_j}\bra{\varphi_j}\otimes\ket{\psi_k}
\bra{\psi_k}\,,
\ee
and we can define
\be
U_k= \sum_j e^{i\eta_{jk}} \ket{\varphi_j}\bra{\varphi_j}\,,\qquad
V_j= \sum_k e^{i\eta_{jk}} \ket{\psi_k}\bra{\psi_k}\,,
\ee
to get the required expression
\be
U=\sum_k U_k\otimes\ket{\psi_k}\bra{\psi_k}=\sum_j \ket{\varphi_j}
\bra{\varphi_j}\otimes V_j\,.
\ee
It is straightforward to see that operators $\{V_j\}$
and $\{U_k\}$ form sets of mutually commuting elements. 

If the unitary operators $V_j$ commute with each other, then
in their spectral form $V_j=\sum_j e^{i\eta_{jk}}\ket{\psi_k}\bra{\psi_k}$.
Clearly, $U=\sum_j \ket{\varphi_j}\bra{\varphi_j}\otimes V_j=
\sum_k U_k\otimes\ket{\psi_k}\bra{\psi_k}$ with 
$U_k=\sum_j e^{i\eta_{jk}}\ket{\varphi_j}\bra{\varphi_j}$. Therefore, $U$
generates decoherence in both interacting systems.
\end{proof}

Setting $\psi_k\equiv\varphi_k$ the system's decoherence basis and the
environment's decoherence basis coincide. In such case the system and
the environment will decohere with respect to the same basis. An example
of such interaction for the case of qubit is the CTRL-Z operator
\be
U_{\rm ctrl-Z}=\ket{0}\bra{0}\otimes I+\ket{1}\bra{1}\otimes\sigma_z=
I\otimes\ket{0}\bra{0}+\sigma_z\otimes\ket{1}\bra{1}\,,
\ee
generating decoherence in the computational basis $\ket{0},\ket{1}$.
Moreover, this collision is symmetric with respect to exchange of the role
of the control and the target.

\section{Entanglement in collision models}\label{aba:sec4}
The phenomenon of quantum entanglement is considered as one of the key
properties of quantum theory. Our aim is not to discuss the philosophical
background of this concept \cite{horodecki_ent,plenio_ent}, but rather 
focus on the dynamics of entanglement in the considered collision models.
We say a state is \emph{separable} if it can be written as a convex combination
of factorized states, i.e.
\be
\varrho_{\rm sep}=\sum_j p_j \xi_1^{(j)}\otimes\cdots\otimes\xi_n^{(j)}\,.
\ee
A state is \emph{entangled} if it is not separable. The question whether
a given state is entangled, or separable turns out to be very difficult.
Partial results are achieved for the system of qubits and therefore
in this section we will restrict our discussion only to qubits. However,
some of the qualitative properties will be extendible also to larger
systems.

A {\it tangle} is a measure of bipartite entanglement defined as
\be
\tau(\omega)=\min_{\omega=\sum_j p_j\ket{\psi_j}\bra{\psi_j}} 
\sum_j p_j \tau(\psi_j)\,,
\ee
where $\tau(\psi)=S_{\rm lin}({\rm tr}_1{\ket{\psi}\bra{\psi}})=
2(1-\tr{({\rm tr}_1{\ket{\psi}\bra{\psi}})^2})=4\det{\rm tr}_1
[{\ket{\psi}\bra{\psi}}]$ is the 
\emph{linear entropy} of the state of one of the subsystems.
Let us note that for pure bipartite states
$S_{\rm lin}({\rm tr}_1{\ket{\psi}\bra{\psi}})=
S_{\rm lin}({\rm tr}_2{\ket{\psi}\bra{\psi}})$. Due to seminal paper of
Wootters \cite{wootters} $\tau(\omega)=C^2(\omega)$, where $C(\omega)$
is the so-called \emph{concurrence} 
\be
C=\max\{0,2\max\{\sqrt{\lambda_j}\}-\sum_j\sqrt{\lambda_j}\}\,,
\ee 
where $\{\lambda_j\}$ are the eigenvalues of 
$R=\omega(\sigma_y\otimes\sigma_y)\omega^*(\sigma_y\otimes\sigma_y)$.

For multiqubit systems the tangle satisfies the so-called 
\emph{monogamy relation} (originally conjectures by Coffman et al.\cite{coffman}
and proved by Osborne et al.\cite{osborne})
\be
\sum_{k, k\neq j} \tau_{jk}\leq \tau_{j}\,,
\ee
where $\tau_j\equiv \tau_{j\overline j}$ and $\overline j$ denotes 
the set of all qubits except the $j$th one, i.e. $\tau_{j}$ is 
a tangle between the $j$th qubit and rest of the qubits considered 
as a single system. The above inequality is called \emph{CKW inequality}.
If the multiqubit system is in a pure state $\Psi$ then
\be
\tau_{j}=4\det{\rm tr}_{\overline{j}}[\ket{\Psi}\bra{\Psi}]\,.
\ee

\begin{example}{\it Saturation of CKW inequalities.}\label{ex:ckw}
At the end of the original paper \cite{coffman} it is stated that states 
in the subspace covered by the basis 
$\{\ket{1}_j\otimes\ket{0^{\otimes N-1}}_{\overline{j}}\}$ saturates 
the CKW inequalities. In what follows we will show
that any state of the form
\be
\label{W}
\ket{\Psi}=\alpha_0 \ket{0^{\otimes N}}+\sum_{j=1}^{N}\alpha_j\ket{1}_j\otimes \ket{0^{\otimes(N-1)}}
\ee
saturates the CKW inequalities.

The reduced bipartite density matrices are of the form
\begin{equation}
\varrho=\left(\begin{array}{cccc}
a & d & e & 0 \\
d^* & b & f & 0 \\
e^* & f^* & c & 0 \\
0 & 0 & 0 & 0 \\ 
\end{array}\right)
\end{equation}
and only one element determines the tangle of such state, namely
\be
\tau(\varrho) = 4ff^* \, .
\ee
It follows that only the matrix element standing with the $\ket{01}\bra{10}$ 
term will be important for us. Direct calculations lead us to 
value $f=\alpha_j\alpha_k^*$ for the pair of $j$th and $k$th, hence
\be
\tau_{jk}=4|\alpha_j|^2 |\alpha_k|^2 \, .
\ee
In the next step we evaluate the tangle between the $j$-th
particle and the rest of the system. The state of single particle
is described by matrix
\begin{equation}
\varrho_j = \left(\begin{array}{cc}
|\alpha_0|^2 +\sum_{k\ne j} |\alpha_k|^2 & \alpha_0\alpha_j^* \\
\alpha_j\alpha_0^* & |\alpha_j|^2 
\end{array}\right)
\end{equation}
Now it is easy to check that
\be
\tau_{j} = 4\det\varrho_j = |\alpha_j|^2\sum_{k\ne j}|\alpha_k|^2 =\sum_{k\ne j} \tau_{jk}
\ee
and therefore the CKW inequalities are saturated, i.e. 
$\Delta_j =\tau_{j} -\sum_{k\ne j} \tau_{jk}=0$ for all $j$.
\end{example}

\subsection{Entanglement in partial swap collision model}
A general pure input state equals
\be
\ket{\Psi}=\ket{\psi}_S\otimes\ket{\varphi}_1\otimes\cdots\otimes\ket{\varphi}_N\,.
\ee
Since $[I,V\otimes V]=[S,V\otimes V]=0$ for all unitary operators $V$
on $\cH$ it follows that also $[U_\eta,V\otimes V]=0$ for all unitary $V$.
Moreover, the value of entanglement is invariant under local unitary 
transformations. Therefore, without loss of generality 
we can set $\ket{\varphi}=\ket{0}$ and 
$\ket{\psi}=\alpha\ket{0}+\beta\ket{1}$. After $n$ interactions
\be
\nonumber
\ket{\Psi_n}=\alpha\ket{0^{\otimes(N+1)}}
+\beta c^n\ket{1}_S\otimes \ket{0^{\otimes N}}+\beta
\sum_{l=1}^n
\ket{1}_l\otimes\ket{0^{\otimes N_{\overline{l}}}}
\left[isc^{l-1}e^{i\eta(n-l)}\right]\,,
\ee
where $N_{\overline{l}}$ denotes a system of $N$ qubits obtained by replacing
the $l$th qubit from the reservoir by the system qubit.

Since $\ket{\Psi_n}$ belongs to the class of states discussed
in Example \ref{ex:ckw}, we know that the state saturates CKW 
inequalities. We can used the derived formulas to write 
\be
\tau_{jk}(n)&=&4 |\beta|^4 s^4c^{2(j+k-2)}\,; \\
\tau_{0k}(n)&=&4 |\beta|^4 s^2c^{2(n+k-1)}\,;\\ 
\tau_{j}(n)&=&4|\beta|^4 s^2c^{2(j-1)}(1-s^2 c^{2(j-1)}) \,;\\
\tau_{0}(n)&=&4|\beta|^4 c^{2n}(1-c^{2n})\,.
\ee
Let us note that these formulas are valid only if $j,k\leq n$. Otherwise
the quantities vanish.

\begin{figure}
\begin{center}
\includegraphics[width=6cm]{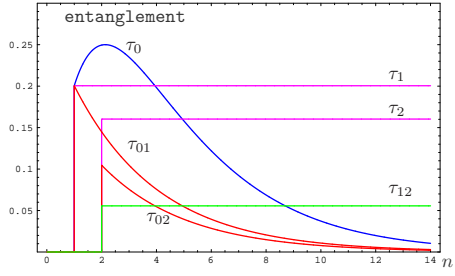}
\caption{Entanglement in the homogenization process.}
\end{center}
\end{figure}

These results show that the system particle acts as a mediator 
entangling the reservoir particles which have never interacted 
directly. It is obvious that the later the two reservoir qubits
interact with the system particle, the smaller the degree of their 
mutual entanglement is. Nevertheless, this value remains constant and 
does not depend on the subsequent evolution of the collision model 
(i.e., it does not depend on the number of interactions $n$). On the 
other hand the entanglement between the system particle and 
$k$th reservoir particle ($k$ arbitrary) increases at the moment
of interaction and then monotonously decreases 
with the number of interaction steps. 

\subsection{Entanglement in controlled unitary collision model}
As before, let us assume that initially the system and the reservoir
of $N$ qubits are described by a pure input state
\be
\ket{\Omega}=\ket{\psi}_S\otimes\ket{\varphi}_1\otimes\cdots\otimes\ket{\varphi}_N\,,
\ee
with $\psi=\alpha\ket{0}+\beta\ket{1}$.
After $n$ controlled unitary collisions we get
\be\ket{\Omega_n}=
  [\alpha\ket{0}\otimes\ket{\varphi_0^{\otimes n}}
    +\beta\ket{1}\otimes\ket{\varphi_1^{\otimes n}}
  ]\otimes\ket{\varphi^{\otimes (N-n)}}\,,
\ee
where we set $\ket{\varphi_0}=V_0\ket{\varphi}$ and
$\ket{\varphi_1}=V_1\ket{\varphi}$.

Tracing out the system we end up with the reservoir described by
the state
\be
\omega_{\rm env}(n)=
(|\alpha|^2\ket{\varphi_0^{\otimes n}}\bra{\varphi_0^{\otimes n}}+
|\beta|^2\ket{\varphi_1^{\otimes n}}\bra{\varphi_1^{\otimes n}})\otimes
\ket{\varphi}\bra{\varphi}^{\otimes (N-n)}
\,,
\ee
which is clearly separable, i.e. $\tau_{jk}(n)=0$. In this case no entanglement
is created within the environment.

A state of the system qubit and $k$th qubit from the reservoir equals
(providing that $n\geq k$)
\be
\nonumber
\varrho_{0k}(n)&=&|\alpha|^2\ket{0\varphi_0}\bra{0\varphi_0}+|\beta|^2\ket{1\varphi_1}\bra{1\varphi_1}\\
& & +\alpha\beta^*|\ip{\varphi_0}{\varphi_1}|^{(n-1)}\ket{0\varphi_0}\bra{1\varphi_1}+c.c\,,
\ee
thus for the tangle we have 
\be
\tau_{0k}(n)=4|\alpha|^2|\beta|^2|\ip{\varphi_0}{\varphi_1}|^{2(n-1)}
|\ip{\varphi_0}{\varphi_1^\perp}|^{2}\,.
\ee
If $n<k$ then $\tau_{0k}(n)=0$. Further,
\be
\varrho_0(n)&=&|\alpha|^2\ket{0}\bra{0}+|\beta|^2\ket{1}\bra{1}+
\alpha\beta^*|\ip{\varphi_0}{\varphi_1}|^{n}\ket{0}\bra{1}+c.c.\,,\\
\varrho_k(n)&=&|\alpha|^2\ket{\varphi_0}\bra{\varphi_0}+|\beta|^2\ket{\varphi_1}\bra{\varphi_1}\,,\\
\ee
resulting in the following values for tangle
\be
\tau_0(n)&=&4|\alpha|^2|\beta|^2(1-|\ip{\varphi_0}{\varphi_1}|^{2n})\,,\\
\tau_k(n)&=&4|\alpha|^2|\beta|^2|\ip{\varphi_0}{\varphi_1^\perp}|^{2}\,.
\ee

\begin{figure}
\begin{center}
\includegraphics[width=6cm]{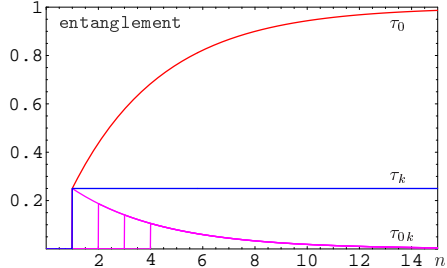}
\caption{Entanglement in the decoherence process.}
\end{center}
\end{figure}

In summary, during the decoherence collision model the qubits in the 
reservoir are not entangled at all. Interaction of the system qubit with
an individual reservoir's qubit entangles these pair of qubits. However,
with the number of collision this bipartite entanglement vanishes 
and finally $\tau_{0k}=0$ for all reservoir's qubits $k$. The entanglement
between a fixed qubit of the reservoir and rest of the qubits preserves its
value it gathered after the interaction. The entanglement between a system qubit
and the reservoir is increasing with the number of collisions and approaches 
the maximal value $\tau_0(\infty)=4|\alpha|^2|\beta|^2$.

\section{Master equations for collision models}\label{aba:sec5}
Within the collision model the system undergoes a discrete dynamics
\be
\varrho\to\cE_1[\varrho]\to\cE_2[\varrho]\to\cdots\to
\cE_n[\varrho]\to\cdots\,,
\ee
where $\cE_k=\cE\circ\cdots\circ\cE=\cE^k$. Let us define 
$\cE_0$ as the identity map $\cI$. It is straightforward to
see that the sequence of channels $\cE_0,\cE_1,\cE_2,\dots$ form a discrete
semigroup satisfying the relation
\be
\cE_{n}\circ\cE_m=\cE_{n+m}
\ee
for all $n,m=0,1,2,\dots$. In this section we will
investigate the continuous interpolations of such 
discrete semigroups. Interpreting the continuous parameter as time
we will derive first order differential master equation generating the
continuous sequence of linear maps. Our derivation of the master equation 
will be purely heuristic. We will simply replace $n$ in $\cE_n$ by 
a continuous time parameter $t$ and after that we will have a look what 
are the consequences. 

Let $\varrho_t=\cE_t[\varrho_0]$. The time derivative of the state dynamics 
results in the following differential equation
\be
\label{eq:derivation}
\frac{d\varrho_t}{dt}=\left(\frac{d\cE_t}{dt}\right)[\varrho_0]=
\left(\frac{d\cE_t}{dt}\circ\cE_t^{-1}\right)[\varrho_t]=\cG_t[\varrho_t]\,,
\ee
where $\cG_t$ is the generator of the dynamics $\cE_t$. Let us note that
this approach is very formal. There is no guarantee that the mapping
$\cE_t$ is a valid quantum channel for each value of $t$ and that
the set $\cE_t$ is indeed continuous. Another problematic issue is 
the assumed existence of the inverse of the map $\cE_t^{-1}$ for 
all $t\geq 0$. All these properties need to be checked before we give
some credit to the derived master equation.

\subsection{One-parametric semigroups}
In a special case when $\cG_t$ is time-independent the solution
of the differential equation
\be
\frac{d\varrho_t}{dt}=\cG[\varrho_t]\,,
\ee
with the initial condition $\varrho_{t=0}=\varrho_0$ can be written in the form
\be
\varrho_t=e^{\cG t}[\varrho_0]\,;\quad\cE_t=e^{\cG t}\,.
\ee
Since $e^{\cG (t+s)}=e^{\cG s}e^{\cG t}$ it follows that 
\be
\cE_t\circ\cE_s=\cE_{t+s}
\ee
for all $t,s\geq 0$ and $\cE_0=\cI$. 
In such case the continuous set of linear maps
$\{\cE_t\}$ form \emph{a one-parametric semigroup}, which are playing 
important role in the theory of open system dynamics. Semigroups of channels
are often used to approximate real open system's evolutions. In particular,
they are satisfying the so-called \emph{Markovianity condition}\cite{spohn} 
restricting the memory effects of the environment. The following theorem
specifies the properties of the generator $\cG$ assuring the validity 
of the quantum channel constraints on $\cE_t=e^{t\cG}$.

\begin{theorem}({\it Lindblad}\cite{lindblad})
$\cE_t=e^{\cG t}$ are valid quantum channels for all $t\geq 0$ if and only
if $\cG$ can be written in the form
\be
\cG[\varrho]=-\frac{i}{\hbar}[H,\varrho]+\frac{1}{2}\sum_{jk} c_{jk}([\Lambda_j\varrho, \Lambda_k]+[\Lambda_j,\varrho \Lambda_k])\,,
\ee
where $\{I,\Lambda_1,\dots,\Lambda_{d^2-1}\}$ is an orthonormal 
operator basis consisting of hermitian traceless operators $\Lambda_j$, 
i.e. $\tr{\Lambda_j\Lambda_k}=\delta_{jk}$,
the coefficients $c_{jk}$ form a positive matrix and $H$ is a hermitian 
traceless operator on $\cH$.
\end{theorem}

\subsection{Divisibility of channels}
Any collision $U$ induced some channel $\cE$. Our interpolation
should smoothly connect this channel with the identity map $\cI$
such that the connecting line is inside the set of channels. Since
the set of channels is convex any two points are connected by a line
containing only proper quantum channels, i.e. interpolation is surely
possible although its continuity is still not guaranteed. Let us have a look
on the possibility to interpolate $\cI$ and $\cE$ such that $\cE=e^{\tau\cG}$
for some time $\tau$ and time-independent Lindblad generator. 
Without loss of generality we can set $\tau=1$, i.e. 
$\cE=e^\cG$. The semigroup property implies that $\cE=\cE_\varepsilon\circ
\cE_{1-\varepsilon}$ for any $0<\epsilon<1$, or even stronger
$\cE=\cE_{\epsilon}^{1/\epsilon}$.

It is a surprising fact\cite{wolf_divisibility} that there are channels
$\cE$, which cannot be expressed as a nontrivial composition of some other
channels $\cE_1,\cE_2$. Consequently not all channels can be part of some
one-parametric semigroup. Before we will show one example, let us formally
define the indivisibility.

\begin{definition}
We say a channel $\cE$ is \emph{indivisible} if $\cE=\cE_1\circ\cE_2$ implies
that either $\cE_1$, or $\cE_2$ are unitary channels (such decomposition
is called trivial).
\end{definition}

Let us note that one of seemingly counter-intuitive consequences
is that unitary channels are indivisible. In a suitable representation
the composition of channels is just a product of matrices. Thus for 
determinants
\be
\det(\cE_1\circ\cE_2)=(\det\cE_1)(\det\cE_2)\,.
\ee
Let us denote by $\cE_{\min}$ a channel minimizing the value of
determinant, i.e. $\cE_{\min}=\arg\min_\cE\det\cE$. Assume 
$\cE_{\min}=\cE_1\circ\cE_2$. Since the value of $\det\cE_{\min}$
is negative and for channels $|\det\cE|\leq 1$,
it follows that either $\det\cE_1=1,\det\cE_2=\det\cE_{\min}$, 
or $\det\cE_2=1,\det\cE_1=\det\cE_{\min}$. But $\det\cE=1$ implies 
$\cE$ is a unitary channel, which proves the triviality of any 
decomposition of the minimum 
determinant channel. If $d=2$ (qubit), then one of the minimum 
determinant channels is the 
\emph{optimal universal NOT}\cite{wolf_divisibility}, i.e.
\be
\cE_{\min}[\varrho]=\frac{1}{3}(I+\varrho^T)\,,
\ee
for which $\det\cE_{\min}=-1/27$. The mapping $\varrho\to\varrho^T$
known as the \emph{universal NOT} is not a proper quantum channel, because
it is not completely positive and, thus, unphysical. Let us note that channels
of minimal determinant are not the only indivisible channels. As far as the
authors know a complete characterization of indivisible channels
is not known.

For our purposes the introduced concept of indivisibility is not sufficient
as we are interested in the existence of continuous semigroup between $\cI$
and $\cE$ generated by a unitary collision $U$. We need $\cE$ to be
\emph{infinitely divisible}, i.e. $\cE=\cE_{1/n}^{n}$ for
all $n>0$. It was shown by Denisov\cite{denisov} that infinitely
divisible channel can be expressed in the form $\cE=\cF\circ e^\cG$,
where $\cF$ is an idempotent channel ($\cF^2=\cF$) satisfying 
$\cF\cG=\cF\cG\cF$ and $\cG$ is a Lindblad generator. We are interested 
in cases when $\cF=\cI$, thus, $\cE=e^\cG$. Then it follows that
the sequence of channels $\cE^n=e^{n\cG}$ determined by the collision model
belongs to the one-parametric semigroup $e^{t\cG}$. In other words a discrete 
semigroup is interpolated by a continuous one. We will see that this is the 
case for the collisions considered in the homogenization and decoherence
processes.

For a given channel induced by a single collision
a question of interest is whether $\cE=e^\cG$ for
some Lindblad generator $\cG$. In other words whether
$\cG=\log\cE$ is a valid generator of complete positive
dynamics. Since channels $\cE$ are not necessarily 
diagonalizable, the notion of the matrix/operator logarithm is
not completely trivial. The logarithm is not unique and any operator 
$X$ such that $e^X=\cE$ is called $\log\cE$. We need to search all
logarithms in order to find a valid generator of the Lindblad form.
Evaluating the logarithm will give us a time-independent generator 
(if it exists) interpolating the discrete semigroup $\cE_n=\cE^n$. 
For both considered collision models the evaluation of the 
logarithm is not difficult. However, for illustrative purposes
in our analysis of homogenization we will follow the {\it ad hoc} 
procedure exploiting 
\eqref{eq:derivation}, which (if successful) could capture also 
time-dependent generators. For the case of decoherence we will 
evaluate the logarithm of the induced channel.

\subsection{Bloch sphere parametrization}
The set of operators form a complex linear space endowed with a 
\emph{Hilbert-Schmidt scalar product} $(X,Y)=\tr{X^\dagger Y}$. 
Let $\varphi_1,\dots,\varphi_d$ be an orthonormal basis of $\cH$.
Then the set of $d^2$ operators $\{E_{jk}=\ket{\varphi_j}\bra{\varphi_k}\}$ 
form an orthonormal operator basis, i.e. $\tr{E_{j^\prime k^\prime}^\dagger E_{jk}}=
\ip{\varphi_{j^\prime}}{\varphi_{j}}\ip{\varphi_k}{\varphi_{k^\prime}}
=\delta_{jj^\prime}\delta_{kk^\prime}$. 

In what follows we will restrict to the case of two-dimensional
Hilbert spaces $\cH$ (qubits). This will be perfectly sufficient
for our later purposes, because we will focus on qubit collision models
of homogenization and decoherence. However, many of the
properties and procedures can be easily extended to more dimensional case.

Let us start with a very convenient operator basis of qubit systems - the 
set of \emph{Pauli operators} 
\be
\nonumber
&I=\ket{0}\bra{0}+\ket{1}\bra{1}\,,\qquad
\sigma_x=\ket{0}\bra{1}+\ket{1}\bra{0}\,,&\\
\nonumber
&\sigma_y=-i(\ket{0}\bra{1}-\ket{1}\bra{0})\,,\quad
\sigma_z=\ket{0}\bra{0}-\ket{1}\bra{1}\,,&
\ee
where $\{\ket{0},\ket{1}\}$ is an orthonormal basis of $\cH$. 
These operators are hermitian, $\sigma_j=\sigma_j^\dagger$, and mutually 
orthogonal, i.e. $\tr{\sigma_j\sigma_k}=2\delta_{jk}$ (we set 
$\sigma_0=I$). They are also unitary, but this property will not
be very important for us.

Using Pauli operators the density operators takes the form
\be
\varrho=\frac{1}{2}(I+\vec{r}\cdot\vec{\sigma})\,,
\ee
where $\vec{\sigma}=(\sigma_x,\sigma_y,\sigma_z)$ and
$\vec{r}=\tr{\varrho\vec{\sigma}}$ is the so-called \emph{Bloch vector}.
The positivity constraint restricts its length $|\vec{r}|\leq 1$, i.e.
in the Bloch vector parametrization the qubit's states
form a unit sphere called \emph{Bloch sphere}. Let us note that such 
simple picture of state space does not hold for more dimensional quantum 
systems.

A quantum channel $\cE$ is acting on Pauli operators as follows 
\be
\varrho\to\varrho^\prime=\cE[\varrho]=\frac{1}{2}(\cE[I]+
x\cE[\sigma_x]+y\cE[\sigma_y]+z\cE[\sigma_z])\,.
\ee
Using $\cE[\sigma_j]=\sum_k\cE_{kj}\sigma_k$ with $\cE_{kj}=\frac{1}{2}
\tr{\sigma_k\cE[\sigma_j]}$. The tracepreservity ensures that
$\cE_{00}=\frac{1}{2}\tr{\cE[I]}=1$ 
and $\cE_{0j}=\tr{\cE[\sigma_j]}=0$ for $j=x,y,z$.
That is,
\be
\varrho^\prime=\frac{1}{2}\left[I+
\sum_{k=x,y,x}\left(
\cE_{k0}+\sum_{j=x,y,x} \cE_{kj}r_j\right)\sigma_k\right]\,.
\ee
Comparing with the expression 
$\varrho^\prime=\frac{1}{2}(I+\vec{r}^\prime\cdot\vec{\sigma})$
we get that under the action of a channel $\cE$ the Bloch vectors
are transformed by an affine transformation
\be
\vec{r}\to\vec{r}^\prime=\vec{t}+T\vec{r}\,,
\ee
where $t_j=\cE_{j0}$ and $T$ is a 3x3 matrix with entries
$T_{jk}=\cE_{jk}$ for $j,k=x,y,z$. 

The Lindblad generator takes the form of $4\times 4$ matrix
\be
\cG=\left(
\begin{array}{cccc}
0 & 0 & 0 & 0\\
g_{10} & g_{11} & g_{12} & g_{13}\\
g_{20} & g_{21} & g_{22} & g_{23}\\
g_{30} & g_{31} & g_{32} & g_{33}
\end{array}
\right)\,,
\ee
where $g_{jk}=\frac{1}{2}\tr{\sigma_j\cG[\sigma_k]}$.
Inserting the Lindblad operator-sum form 
\be
\cG[X]=-\frac{i}{\hbar}\sum_{j=x,y,z} h_j[\sigma_j,X]+\frac{1}{2}\sum_{j,k=x,y,z} 
c_{jk}([\sigma_j,X\sigma_k]+[\sigma_j X,\sigma_k])\,.
\ee
into the matrix expression we obtain
\begin{eqnarray}
g_{jk} &=& 2\sum_l\epsilon_{jkl}h_l
+\frac{1}{2}(c_{kj}+c_{jk})-\sum_l c_{ll}\delta_{jk}\,, \\
g_{k0} &=& i\sum_{jl}\epsilon_{jlk} c_{jl}\,.
\end{eqnarray}
The inverse relations express the parameters $c_{jk}$ and $h_j$
via the elements of the matrix $\cG$
\be
\nonumber
& 
h_1 =  \frac{g_{32} -g_{23}}{4}\,,\quad
h_2  =  \frac{g_{13} -g_{31}}{4}\,,\quad
h_3  =  \frac{g_{21} -g_{12}}{4} & \\
\label{eq:prechod}
&c_{jj}=g_{jj}-\frac{1}{2}\sum_{k} g_{kk}&\\\nonumber
& c_{12}=\frac{1}{2}(g_{12}+g_{21}-ig_{30})\,,\quad
c_{21}=\frac{1}{2}(g_{12}+g_{21}+ig_{30})\,,&\\\nonumber
& c_{23}=\frac{1}{2}(g_{23}+g_{32}-ig_{10})\,,\quad
c_{32}=\frac{1}{2}(g_{23}+g_{32}+ig_{10})\,,&\\\nonumber
& c_{13}=\frac{1}{2}(g_{13}+g_{31}+ig_{20})\,,\quad
c_{31}=\frac{1}{2}(g_{13}+g_{31}-ig_{20})\,.&
\ee

\subsection{Master equation for homogenization collision model}
The collision model of quantum homogenization driven by partial swaps
results in the sequence of channels being powers of a channel $\cE$
defined in Eq.\eqref{2.4}, i.e.
\be
\label{eq:bloch_homo}
\vec{r}\to\vec{r}^\prime=c^2\vec{r}+s^2\vec{w}-cs\vec{w}\times\vec{r}\,,
\ee
where $\vec{w}$ is the Bloch vector associated with the state $\xi$ and
$c=\cos\eta,s=\sin\eta$. Choosing a new operator basis 
$S_j=V\sigma_j V^\dagger$ such that $\xi=\frac{1}{2}(I+wS_3)$ 
we get
\be
\cE=\left(
\begin{array}{cccc}
1 & 0 & 0 & 0\\
0 & c^2 & csw & 0\\
0 & -csw & c^2 & 0\\
s^2 w & 0 & 0 & c^2
\end{array}
\right)\,.
\ee
Let us note that the basis transformation $\sigma_j\to S_j$ corresponds 
to a rotation of the coordinate system of the Bloch sphere 
representation. Moreover, we used that the partial swap commutes with
unitaries of the form $V\otimes V$, hence the form of the induced map 
\eqref{eq:bloch_homo} is unaffected, only the vectors are expressed 
with respect to different coordinate system. 

Let us introduce an angle $\theta=\arctan(ws/c)=\arctan(w\tan\eta)$
and a parameter $q=\sqrt{c^2+w^2s^2}$. Using the identity 
$\cos\arctan(x)=1/\sqrt{1+x^2}$ we get $q\cos\theta=\cos\eta=c$
and $q\sin\theta=w\sin\eta$, thus
\be
\cE=\left(
\begin{array}{cccc}
1 & 0 & 0 & 0\\
0 & cq\cos\theta & cq\sin\theta & 0\\
0 & -cq\sin\theta & cq\cos\theta & 0\\
s^2 w & 0 & 0 & c^2
\end{array}
\right)\,.
\ee
The reason for such parametrization becomes clear if we evaluate
powers of $\cE$
  \begin{equation}
    \nonumber  \cE^n=\left(
  \begin{array}{cccc}
    1 &0 & 0 & 0 \\
    0 & c^{n}q^n\cos{n\theta} & c^nq^n\sin{n\theta} & 0\\
    0 & -c^nq^n\sin{n\theta} & c^{n}q^n\cos{n\theta} & 0\\
    w (1-c^{2n}) & 0 & 0 & c^{2n}
    \end{array}
  \right)\,.
  \end{equation}

In the next step we make \emph{ad hoc} assumption and replace 
$n$ by $t/\tau$, where $\tau$ is some time scale and $t\geq 0$ is 
a continuous time parameter. Introducing the parameters
\be
\Omega=\theta/\tau\,,\quad 
c^{2t/\tau}=e^{-\Gamma_1 t}\,,\quad
(cq)^{t/\tau}=e^{-\Gamma_2 t}\,,
\ee
we end up with the continuous set of trace-preserving linear maps
  \begin{equation}
  \cE_t=\left(
  \begin{array}{cccc}
    1 &0 & 0 & 0 \\
    0 & e^{-\Gamma_2 t}\cos{\Omega t} & e^{-\Gamma_2 t}\sin{\Omega t} & 0\\
    0 & -e^{-\Gamma_2 t}\sin{\Omega t} & e^{-\Gamma_2 t}\cos{\Omega t} & 0\\
    w (1-e^{-\Gamma_1 t}) & 0 & 0 & e^{-\Gamma_1 t}
    \end{array}
  \right)\,.
  \end{equation}
It is not difficult to see that they form a one-parametric 
semigroup ($\cE_t\circ\cE_s=\cE_{t+s}$), but one needs to verify 
whether for each $t$ these maps are completely positive. Therefore,
we calculate the generator $\cG$ and verify its properties. In particular,
we obtain a time-independent generator
\be
    \nonumber  \cG=\left(
  \begin{array}{cccc}
    0 &0 & 0 & 0 \\
    0 & -\Gamma_2 & -\Omega & 0\\
    0 & \Omega & -\Gamma_2 & 0\\
    w\Gamma_1 & 0 & 0 & -\Gamma_1
    \end{array}
  \right)\,.
\ee
Using the Eqs.\eqref{eq:prechod} we can verify that the parameters
$h_1=h_2=0,h_3=-\Omega/2$ are real and that the matrix
\be
C=\left(\begin{array}{ccc}
-\Gamma_1/2 & -i2w\Gamma_1 & 0 \\
i2w\Gamma_1 & -\Gamma_1/2 & 0 \\
0 & 0 & -(\Gamma_1+2\Gamma_2)/2 \\
\end{array}\right)
\ee
is positive, hence, the generator $\cG$ is of Lindblad form.
In the operator-sum (Kraus) form the master equation reads
\be
\frac{d\varrho}{dt}&=&i\frac{\Omega}{2\hbar}[S_z,\varrho]
-iw\Gamma_1 (\Phi_{xy}[\varrho]+\Phi_{yx}[\varrho])\\\nonumber
& & -\frac{\Gamma_1}{4}(\Phi_{xx}[\varrho]+\Phi_{yy}[\varrho])
-\frac{\Gamma_1+2\Gamma_2}{4}\Phi_{xx}[\varrho]\,,
\ee
where $\Phi_{jk}[\varrho]=\frac{1}{2}([\sigma_j\varrho,\sigma_k]+[\sigma_j,\varrho \sigma_k])$. After a little algebra we obtain
\be
\frac{d\varrho}{dt}&=&i\frac{\Omega}{2\hbar}[S_z,\varrho]
  -iw\Gamma_1(S_x\varrho S_y-S_y\varrho S_x+i\varrho S_z+iS_z\varrho)\\
  \nonumber & &
  +\frac{1}{4}\Gamma_1(S_x\varrho S_x+S_y\varrho S_y-2\varrho)
  +\frac{1}{4}(2\Gamma_2-\Gamma_1)(S_z\varrho S_z-\varrho)\,.
  \ee

Let us note that for the special choice of parameters the above master
equation coincide with the master equation used to model the
spontaneous decay of a two-level atom. In particular,
setting $\Gamma_1=2\Gamma_2=2\gamma=-\frac{2}{\tau}\ln\cos\eta$ 
and $\xi$ being a pure state ($w=1$) we get
\be
\frac{d}{dt}\varrho=-i\frac{\Omega}{2\hbar}[S_z,\varrho]+\frac{\gamma}{2}[2S_-\varrho S_+-S_-S_+\varrho-\varrho S_+S_-)
\ee
where we used $S_{\pm}=(S_x\pm iS_y)/2$.

\subsection{Master equation for decoherence collision model}
In this section we will repeat the same steps as in the previous one,
but for collisions described by controlled unitary interactions. Without loss
of generality we will assume that the decoherence basis coincide with 
the eigenvectors of $S_z=V\sigma_z V^\dagger$ operator. The channel
induced by a collision $U=\ket{0}\bra{0}\otimes V_0+\ket{1}\bra{1}\otimes V_1$
reads
\be
\cE=
\left(
\begin{array}{cccc}
1 & 0 & 0 & 0 \\
0 & \lambda\cos\varphi &\lambda\sin\varphi & 0\\
0 & -\lambda\sin\varphi &\lambda\cos\varphi & 0\\
0 & 0 & 0 & 1
\end{array}
\right)\,,
\ee
where $\tr{V_1^\dagger V_0\xi}=\lambda e^{i\varphi}$. We can follow the same
derivation of master equation as in the case of homogenization 
\cite{ziman_decoherence}. However, for the illustration purposes
we will describe the second procedure based on evaluation of $\log\cE$.
Unlike the case of homogenization the decoherence channel $\cE$
defines a hermitian matrix, thus, the logarithm is pretty easy to 
calculate exploiting the simple functional calculus for hermitian 
operators. 

Let us start with the observation that
\be
\lambda(\cos\varphi I+i\sin\varphi \sigma_y)=e^{\ln\lambda I} 
e^{i(\varphi+2k\pi)\sigma_y}=e^{\ln\lambda I+i(\varphi+2k\pi)\sigma_y}\,,
\ee
where $k$ is arbitrary natural number. For $k=0$ the logarithm is called
principal. For our purposes it is sufficient to consider only this one,
because we are restricted to angles inside the interval $[0,2\pi]$.
Since this matrix is the central part of the matrix $\cE$, it follows that
\be
\cG=\log\cE=\left(
\begin{array}{cccc}
0 & 0 & 0 & 0 \\
0 & \ln\lambda & -\varphi & 0\\
0 & \varphi &\ln\lambda & 0\\
0 & 0 & 0 & 0
\end{array}
\right)\,,
\ee
is the generator of the semigroup dynamics containing the channel 
$\cE$. Is $\cE_t=e^{\cG t}$ a valid quantum channel for any $t$?

Using the relations specified in Eqs.\eqref{eq:prechod} we found that
the only nonzero parameters are 
\be
h_3=\varphi/2\,,\qquad c_{33}=-\ln\lambda/2\,.
\ee
Since $c_{33}\geq 0$ it follows that the matrix $C$ is positive.
Therefore
\be
\nonumber
\cG[\varrho]&=&-i\frac{\varphi}{2\hbar}[S_z,\varrho]-\frac{\ln\lambda}{4}
([S_z,\varrho S_z]+[S_z\varrho,S_z])\\
\nonumber
&=&-i\frac{\varphi}{2\hbar}[S_z,\varrho]-\frac{\ln\lambda}{2}
(S_z\varrho S_z-\varrho)\,,
\ee
defines a correct Lindblad generator. Using 
$H=\frac{1}{2}\varphi S_z$ and $\gamma=(2\ln\lambda)/\varphi^2$
we obtain a well-known master equation
\be
\frac{d\varrho}{dt}
=-\frac{i}{\hbar}[H,\varrho]-\frac{\gamma}{2}
[H,[H,\varrho]]\,,
\ee  
which is used to model the decoherence also for more dimensional systems.

\section{Conclusions}\label{aba:sec6}
In these lectures we introduced and investigated the simple collision 
model to capture relevant features and properties of quantum open system 
dynamics. We focused on two particular collision models determined by
the choice of the unitary transformations describing the individual
collisions:
\begin{itemize}
\item{\it Homogenization} induced by the partial swap interactions 
$U_\eta=\cos\eta I+i\sin\eta S$. This process was motivated by the classical
thermalization process (0th law of thermodynamics).

\item{\it Decoherence} induced by the controlled unitary interactions
$U=\sum_j \ket{\varphi_j}\bra{\varphi_j}\otimes V_j$. This process
describes the disappearance of the quantumness of quantum systems.
\end{itemize}

It is known that creation of entanglement requires interactions between
quantum systems. In our collision model we start from completely factorized 
state. It is an interesting question what type of multipartite entanglement
is created via a well-defined sequence of bipartite collisions. We haven't
provided a definite answer to this problem, but we illustrated that the
created entanglement of the particular collision models is not trivial.
In particular, the partial swap interaction creates $W$-type of entanglement
for which all the involved systems are pairwisely entangled although they
did not interact directly. Moreover, during the whole evolution the
CKW inequalities are saturated. The entanglement created in the decoherence
collision model is of completely different quality. In this case the created
entanglement is of $GHZ$-type meaning that pairwise entanglement in the 
reservoir vanishes, however the system is still entangled. The reservoir
itself (the system is traced out) is in a separable state. The collision models
can be understood as a (simple) preparation processes aiming to create 
multipartite entanglement. And it is of interest to understand what quality 
and quantity of entanglement it is capable of.

The well-defined sequence of collisions and specific initial conditions
imply that the discrete time evolution of the system is described by
a discrete semigroup of natural powers of $\cE$. May this
this discrete  set of channels be interpolated by a single 
one-parametric continuum of channels? Is this continuum 
a semigroup of channels, or not? 
Luckily, we have derived that the interpolating continuum for 
both cases of homogenization and decoherences. By deriving the
corresponding master equations and testing the Markovianity of the generator
we showed that these sets form one-parametric semigroups 
of channels. Let us note that a collision model simulation of 
given semigroups of channels is easy. Each semigroup $\cE_t$ can
be discretized by introducing a parameter $\tau$ and set 
$\cE_n=\cE_{n\tau}$. Since the choice of an interaction $U$
inducing $\cE_\tau$ is not unique, also the collision model 
is not unique. However, the particular collision dynamics
of the system will be the same. 

Let us note an interesting fact. Fix a unitary collision $U$. Even if
$\xi,\xi^\prime$ induce a channels $\cE,\cE^\prime$ with valid generators
$\cG,\cG^\prime$, the convex combination $\lambda\xi+(1-\lambda)\xi^\prime$ 
does not have to be associated with a correct Lindblad generator. There 
are collision models that cannot
be interpolated by Markovian dynamics. Surprisingly, collision models
provides richer dynamics than Lindblad's master equations. To sort
out example consider 
Markovian models developed in order to describe true open system dynamics 
can be efficiently simulated (approximated) in collision models.
We believe that such toy collision models provide an interesting playground
capturing all the conceptual features of quantum open system dynamics.

\section*{Acknowledgments}
We would like to thank coauthors of the original papers 
Peter \v Stelmachovi\v c, Valerio Scarani, Nicolas Gisin, and Mark Hillery. 
Many thanks also to Daniel Burgarth for inspiring discussions on closely 
related topics, which would deserve some space in these lectures, 
but due to space-time constraints finally left outside the scope. 
We acknowledge financial support 
of the European Union projects HIP FP7-ICT-2007-C-221889, and
CE QUTE ITMS NFP 26240120009, and of the projects, CE SAS QUTE, 
and MSM0021622419.

\bibliographystyle{ws-procs9x6}
\bibliography{ws-pro-sample}

\end{document}